\newtheorem{Proposition}{Proposition}
\newtheorem{Lemma}{Lemma}
\newtheorem{lemma}[Lemma]{$\mathbf{Lemma}$}
\newtheorem{proposition}[Proposition]{Proposition}
\newcounter{problem}
\newcounter{save@equation}
\newcounter{save@problem}
\begin{document}%%
\title{ {\huge On the Impact of   Phase Shifting Designs on IRS-NOMA }}

\author{ Zhiguo Ding, \IEEEmembership{Fellow, IEEE}, Robert Schober, \IEEEmembership{Fellow, IEEE}, and H. Vincent Poor, \IEEEmembership{Fellow, IEEE}    \thanks{ 
  
\vspace{-2em}

    Z. Ding and H. V. Poor are  with the Department of
Electrical Engineering, Princeton University, Princeton, NJ 08544,
USA. Z. Ding
 is also  with the School of
Electrical and Electronic Engineering, the University of Manchester, Manchester, UK (email: \href{mailto:zhiguo.ding@manchester.ac.uk}{zhiguo.ding@manchester.ac.uk}, \href{mailto:poor@princeton.edu}{poor@princeton.edu}).
R. Schober is with the Institute for Digital Communications,
Friedrich-Alexander-University Erlangen-Nurnberg (FAU), Germany (email: \href{mailto:robert.schober@fau.de}{robert.schober@fau.de}).

  }\vspace{-2em}}
 \maketitle
 
\begin{abstract}  
In this letter, the impact of two phase shifting designs, namely random phase shifting and coherent phase shifting, on the performance of intelligent reflecting surface (IRS) assisted non-orthogonal multiple access (NOMA) is studied. Analytical results are developed to show that the two designs achieve different tradeoffs between  reliability  and complexity.  Simulation results are provided to    compare IRS-NOMA to conventional relaying and IRS assisted orthogonal multiple access, and also to verify the accuracy of   the obtained analytical results. 
\end{abstract} \vspace{-1em} 

\section{Introduction}
Recently,   intelligent reflecting surfaces (IRS) have received significant attention, due to their  ability to intelligently reconfigure   wireless communication environments for better reception reliability at a low cost \cite{irs2, irs1,irsce3,robertirsmy1}. Similar to finite-resolution analogue-beamforming (FRAB),   each reflecting element on the IRS changes the phase of the reflected signal only, without modifying   its amplitude   \cite{7918554}. As a new energy and spectrally efficient technique,   IRS have been recently shown   compatible with various advanced communication techniques, including  millimeter-wave communications, unmanned aerial vehicle networks, physical layer security, simultaneous wireless information and power transfer \cite{sk8e,skyx1,skyx2}. 

In this letter, we  investigate how the  phase shifting design affects the performance of IRS assisted non-orthogonal multiple access (NOMA).  Two types of phase shifting designs are considered. The first one is coherent phase shifting, where    the phase shift of each reflecting element is matched with the phases of its incoming and outgoing fading channels.   
Despite its superior performance,   coherent phase shifting might not be applicable in practice  because of the finite resolution of practical IRS phase shifters   and the excessive system overhead caused by  acquiring channel state information (CSI) at the source. This motivates the second design employing    random phase shifting, where the scheme developed in \cite{irsmy1} can be viewed as a special case. The central limit theorem (CLT) is shown to be an accurate approximation tool for analyzing  the performance of the random phase shifting scheme.  In contrast, for the coherent phase shifting scheme, the approximation  obtained with   the CLT is  accurate at   low signal-to-noise ratio (SNR)  only. This   motivates the development of an upper bound on the outage performance, which is shown to be more accurate than the CLT-based result in the high SNR regime.  

\vspace{-1em}
%%%%%%%%%%%%%%%%%%
\section{System Model}
Consider a cooperative communication scenario with one source and two users, denoted by $\text{U}_1$ and $\text{U}_2$, respectively.   We assume that a direct link is not available between the source and $\text{U}_1$  due to severe blockage. Three cooperative communication strategies are described in the following subsections, respectively. 

\subsubsection{IRS-NOMA}
IRS-NOMA  ensures that the two users are simultaneously served. In particular, the source broadcasts the superimposed message, $c_1s_1+c_2s_2$, where $s_i$ denotes the unit-power signal for ${\rm U}_i$, $c_i$ denotes the power allocation coefficient and we assume that $c_1\geq c_2$ and $c_1^2+c_2^2=1$. 

The signals received by  ${\rm U}_1$ and  ${\rm U}_2$ are given by
\begin{align}
y_1  =   \frac{\mathbf{g}_1^H\boldsymbol \Theta \mathbf{g}_0}{\sqrt{d_{r}^{\alpha}d^{\alpha}_{r1}}} \sqrt{P_s}\left(c_1s_1+c_2s_2\right) +w_1,
\end{align}
and
\begin{align}\label{y2d}
y_2  = \left(\frac{h_2}{\sqrt{d_2^{\alpha}}} + \frac{\mathbf{g}_2^H\boldsymbol \Theta \mathbf{g}_0}{\sqrt{d_{r}^{\alpha}d^{\alpha}_{r2}}}\right)\sqrt{P_s}\left(c_1s_1+c_2s_2\right) +w_2,
\end{align}
where $P_s$ denotes the power used by the source, $\boldsymbol \Theta$ denotes the $N\times N$ diagonal phase shifting matrix with its $N$ main diagonal elements representing the reflecting elements of the IRS, $\mathbf{g}_i$ denotes the fading vector between the IRS and ${\rm U}_i$,  $\mathbf{g}_0$ denotes the fading      vector  between the source and the IRS, $h_2$ and $d_2$ denote the Rayleigh fading gain and the distance from the source to $\text{U}_2$, respectively,  $d_r$ and $d_{ri}$ denote the distances from the IRS to the source and ${\rm U}_i$, respectively, $w_i$ denotes the    noise at ${\rm U}_i$, and $\alpha$ denotes the path loss exponent. We assume that the noise power is  normalized and all fading gains are complex Gaussian distributed with zero mean and unit variance, i.e., $\mathcal{CN}(0,1)$. 

${\rm U}_1$ treats $s_2$ as noise when decoding its own signal, $s_1$, which means that the outage probability achieved by IRS-NOMA is given by
\begin{align}\label{noma}
{\rm P}_1^{{\rm NOMA}} &= {\rm P}\left( \log\left(1+\frac{P_sc_1^2 \frac{ \left|\mathbf{g}_1^H\boldsymbol \Theta \mathbf{g}_0\right|^2}{ {d_{r}^{\alpha}d^{\alpha}_{r1}}}}{P_s c_2^2\frac{ \left|\mathbf{g}_1^H\boldsymbol \Theta \mathbf{g}_0\right|^2}{ {d_{r}^{\alpha}d^{\alpha}_{r1}}}+1}\right)< R_1 \right) ,
\end{align} 
where $R_i$ denotes $\text{U}_i$'s target data rate and ${\rm P}(E)$ denotes the probability of event $E$. 
We note that the reflecting path in \eqref{y2d} can be insignificantly weaker than the direct link. For example, for a case with $d_2=d_r$, $d_{r2}=10$ m and $\alpha=4$, the path loss of the reflecting path is $10^{4}$ times larger than  that of the direct link. For such scenarios,  ${\rm U}_2$'s outage probability is similar to that in conventional NOMA without IRS.

\subsubsection{Conventional Relaying}
A straightforward benchmarking scheme is  cooperative  OMA transmission without IRS, which   involves  two phases.    The first phase needs to be further divided into two time slots. During the first time slot,  the source sends   $s_1$ to $\text{U}_2$, and during the second time slot,   $\text{U}_2$ forwards $s_1$ to $\text{U}_1$, if it can decode $s_1$. Otherwise, $\text{U}_2$  keeps silent. During the second phase, the source serves $\text{U}_2$ directly. Therefore, the outage probabilities for $\text{U}_i$ are given by 
\begin{align} \nonumber
{\rm P}_1^{{\rm OMA}} &=   {\rm P}\left( E_1\right)+  {\rm P}\left(\frac{1}{4}\log\left(1+ P_r \frac{|h_{12}|^2}{d_{12}^{\alpha}}\right)< R_1, E_1^c\right), 
\end{align}
 and ${\rm P}_2^{{\rm OMA}} = {\rm P}\left(\frac{1}{2}\log\left(1+P_2 \frac{|h_2|^2}{d_2^{\alpha}}\right)< R_2\right),$ where $E_1\triangleq \left\{\frac{1}{4}\log\left(1+P_1 \frac{|h_2|^2}{d_2^{\alpha}}\right)\leq R_1\right\}$, respectively, $E_1^c$ denotes the complementary event of $E_1$,  $h_{12}$ and $d_{12}$ denote the fading gain and the distance between the two users, respectively,    $P_i$ denotes the source's transmit power for ${\rm U}_i$'s signal, and  $P_r$ denotes the relay transmission power.   For a fair comparison, we assume that $ P_1=P_r=P_2=P_s$. 

%We note that it is possible that conventional relaying outperforms IRS transmission, since  conventional relaying avoids the product of the path losses, ${d_{r}^{\alpha}d^{\alpha}_{ri}}$.  

\subsubsection{IRS-OMA}
IRS assisted cooperative OMA also includes two phases. During the $i$-th phase,   the source serves  $\text{U}_i$ with the help of the IRS.    
 Therefore, the outage probability experienced by    ${\rm U}_1$ is given by
\begin{align}\label{ioma}
{\rm P}_1^{{\rm I-OMA}} &= {\rm P}\left(\frac{1}{2}\log\left(1+P_0  \frac{\left| \mathbf{g}_1^H\boldsymbol \Theta \mathbf{g}_0\right|^2}{ {d_{r}^{\alpha}d^{\alpha}_{r1}}}\right)< R_1 \right), \end{align} 
where $ P_0$ denotes the transmit power of $s_1$ in IRS-OMA.  For a fair comparison, we assume that  $P_0=P_s$.  The outage probability for ${\rm U}_2$ can be obtained similarly.

\section{Performance Analysis}
It is straightforward to observe that ${\rm U}_2$'s outage performance in IRS-NOMA  is much better than those of the benchmarking schemes. Because the   analysis of ${\rm U}_2$'s outage probability is similar to the case without the IRS,   in the remainder of the letter, we mainly focus on the outage probability experienced by  ${\rm U}_1$. 

By assuming that the fading is Rayleigh distributed, it is straightforward to show that the outage probability achieved by conventional relaying is given by 
\begin{align}\label{comax}
{\rm P}_1^{{\rm OMA}} &=   \left(1-e^{-d_{2}^{\alpha}\left(2^{4R_1}-1\right)P_1^{-1}}\right)+  e^{-d_{2}^{\alpha}\left(2^{4R_1}-1\right)P_1^{-1}} \\\nonumber &\times  \left(1-e^{-d_{12}^{\alpha}\left(2^{4R_1}-1\right)P_1^{-1}}\right). 
\end{align}

The evaluation of the outage probabilities, ${\rm P}_1^{{\rm NOMA}} $ and ${\rm P}_1^{{\rm I-OMA}}$, depends on how the phase shifting matrix $\boldsymbol \Theta$ is designed, where two designs with different tradeoffs between performance and complexity are introduced in the following.  

\vspace{-0.5em}
\subsection{Coherent Phase Shifting} 
Define $\mathbf{g}_i^H\boldsymbol \Theta \mathbf{g}_0$ as follows:
\begin{align}\label{theta}
\xi_N \triangleq  \mathbf{g}_i^H\boldsymbol \Theta \mathbf{g}_0 = \sum^{N}_{n=1} e^{-j\theta_{n}} g_{0,n} g_{i,n},
\end{align}
where $g_{i,n}$ and $g_{0,n}$ denote the $n$-th elements of $\mathbf{g}_i$ and $\mathbf{g}_0$, respectively, and   $\theta_n$ denotes the   phase shift of the $n$-th reflecting element of the IRS.

Assume that the phase of $g_{0,n} g_{i,n}$ can be acquired by the source. For the coherent phase shifting design, the phase shifts of the IRS are matched with the phases of the IRS fading gains, which yields the following:  
\begin{align}
\xi_N   = \sum^{N}_{n=1} \left| g_{0,n} g_{i,n}\right|. 
\end{align}

The evaluation of the outage probabilities requires   knowledge of the probability density function (pdf) of $\xi_N$, which is difficult to obtain. Therefore, two approximations are considered  in the following subsections.

\subsubsection{CLT-based Approximation} We note that the $\left| g_{0,n} g_{i,n}\right|$, $1\leq n \leq N$, are independent and identically distributed (i.i.d.), and hence $\xi_N$ is a sum of i.i.d. random variables, which motivates the use of the CLT. 
 
As shown in \cite{irsmy1,6725568},  the pdf of $\left| g_{0,n} g_{i,n}\right|^2$ is given by
\begin{align}
f_{\left| g_{0,n} g_{i,n}\right|^2}(x) = 2K_0\left(2\sqrt{x}\right),
\end{align}
which can be used to find the pdf of $\left| g_{0,n} g_{i,n}\right|$ as follows:
\begin{align}\label{abs1}
f_{\left| g_{0,n} g_{i,n}\right|}(y) = 4yK_0\left(2y\right),
\end{align}
where $K_i(\cdot)$ denotes the modified Bessel function of the second kind.  

The use of the CLT requires  the mean and the variance of $\left| g_{0,n} g_{i,n}\right|$, which can be obtained as follows:
\begin{align}
\mu_{\left| g_{0,n} g_{i,n}\right|} = &\int^{\infty}_{0} 4y^2K_0\left(2y\right)dy=\frac{\pi}{4},
\end{align}
and
\begin{align}
\sigma^2_{\left| g_{0,n} g_{i,n}\right|} = &\int^{\infty}_{0} 4y^3K_0\left(2y\right)dy - \frac{\pi^2}{16}=1-\frac{\pi^2}{16},
\end{align}
which follows from \cite[(6.561.16)]{GRADSHTEYN}.
 By using the CLT,   $\xi_N$ can be approximated as the following Gaussian random variable: 
\begin{align}\label{gauss}
\sqrt{N}\left(\frac{\xi_N}{N}-\mu_{\left| g_{0,n} g_{i,n}\right|}\right) \sim \mathcal{N}\left(0, \sigma^2_{\left| g_{0,n} g_{i,n}\right|}\right).
\end{align} 

Therefore, the outage probability of IRS-NOMA can be approximated  as follows:
\begin{align}\nonumber
{\rm P}_1^{{\rm NOMA}} &= {\rm P}\left(  \mathbf{g}_1^H\boldsymbol \Theta \mathbf{g}_0  <\sqrt{\epsilon_1} \right)\\ \label{iomaxge}
&\approx\frac{1}{2}+\frac{1}{2}\phi\left(\frac{\sqrt{N}\left(\frac{\sqrt{\epsilon_1}}{N}-\mu_{\left| g_{0,n} g_{i,n}\right|}\right)}{\sqrt{2}  \sigma_{\left| g_{0,n} g_{i,n}\right|}}\right), 
\end{align} 
where $\epsilon_1= \frac{{d_{r}^{\alpha}d^{\alpha}_{r1}}\epsilon}{P_s(c_1^2-\epsilon c_2^2)}$, $\epsilon=2^{R_1}-1$,  and $\phi(x)\triangleq \frac{2}{\sqrt{\pi}}\int^{x}_{0}e^{-t^2}dt$. It is assumed in this letter that    $c_1^2>\epsilon c_2^2$, otherwise ${\rm P}_1^{{\rm NOMA}}=1$. The outage probability for IRS-OMA can be obtained similarly by replacing $\epsilon_1$ with $\epsilon_2 = \frac{{d_{r}^{\alpha}d^{\alpha}_{r1}}\left(2^{2R_1}-1\right)}{P_0}$ in \eqref{iomaxge}.

\subsubsection{An upper bound} As shown in Section \ref{section sim}, the CLT-based approximation is not accurate in the high SNR regime, which motivates the development of a tight bound for the outage probability. 
To this end, we   focus on IRS-NOMA. An upper bound on the outage probability is given  in the following lemma. 

\begin{lemma}\label{lemma0}
Assume that $N$ is an even number and denote $\bar{N}=\frac{N}{2}$. The outage probability achieved by IRS-NOMA is upper bounded as follows:
\begin{align} \label{iomax41s}
{\rm P}_1^{{\rm NOMA}} 
&\leq   \frac{2^N\pi^{\frac{N}{2}}\Gamma^N\left(\frac{3}{2}\right)}{ (3\bar{N}-1)!}  2^{-3\bar{N}} \gamma\left(3\bar{N}, 2\sqrt{\epsilon_1}\right),
\end{align}
where $\gamma(\cdot,\cdot)$ denotes the incomplete Gamma function and $\Gamma(\cdot)$ denotes the Gamma function.  
\end{lemma}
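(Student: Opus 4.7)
The plan is to dominate the common density $f_{|g_{0,n}g_{i,n}|}(y)=4yK_0(2y)$ pointwise by a multiple of a $\mathrm{Gamma}(3/2,2)$ density, which turns the $N$-fold convolution defining $\mathrm{P}(\xi_N<\sqrt{\epsilon_1})$ into a single $\mathrm{Gamma}(3\bar N,2)$ CDF. This explains, a posteriori, why the bound in \eqref{iomax41s} carries shape parameter $3\bar N=3N/2$: the ``$3/2$ per summand'' comes from a $\sqrt{y}$ factor that is hidden inside a standard bound on $K_0$.

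First I would invoke the pointwise inequality $K_0(x)\leq\sqrt{\pi/(2x)}\,e^{-x}$ valid for all $x>0$; this follows from the integral representation $K_0(x)=\int_0^\infty e^{-x\cosh t}\,dt$ together with $\cosh t=1+2\sinh^2(t/2)$, the substitution $u=\sinh(t/2)$, and the crude estimate $(1+u^2)^{-1/2}\leq 1$, which reduces the integral to a Gaussian. Setting $x=2y$ yields
\begin{equation*}
4y\,K_0(2y)\;\leq\;2\sqrt{\pi}\,\sqrt{y}\,e^{-2y},\qquad y>0,
\end{equation*}
i.e., the density of $|g_{0,n}g_{i,n}|$ is at most $\tfrac{\pi}{2\sqrt{2}}$ times the $\mathrm{Gamma}(3/2,2)$ density $\tfrac{2^{3/2}}{\Gamma(3/2)}\sqrt{y}\,e^{-2y}$.

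Next I would plug this pointwise bound into the joint density of the $N$ i.i.d.\ summands and integrate over the simplex $\mathcal S=\{y_n\geq 0:\sum_n y_n<\sqrt{\epsilon_1}\}$:
\begin{equation*}
\mathrm{P}(\xi_N<\sqrt{\epsilon_1})\;\leq\;(2\sqrt{\pi})^N\!\int_{\mathcal S}\!\prod_{n=1}^N\sqrt{y_n}\,e^{-2y_n}\,dy_1\cdots dy_N.
\end{equation*}
Since a sum of $N$ i.i.d.\ $\mathrm{Gamma}(3/2,2)$ variables is $\mathrm{Gamma}(3N/2,2)=\mathrm{Gamma}(3\bar N,2)$, the remaining simplex integral evaluates in closed form to $\tfrac{\Gamma(3/2)^N}{2^{3\bar N}(3\bar N-1)!}\,\gamma\!\left(3\bar N,\,2\sqrt{\epsilon_1}\right)$; combining with the prefactor $(2\sqrt{\pi})^N=2^N\pi^{N/2}$ reproduces the right-hand side of \eqref{iomax41s} term-for-term.

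The only nontrivial step is the Bessel inequality in step one; everything afterwards is a rewrite of a $\mathrm{Gamma}$-convolution CDF and should be presented briefly. No pathwise estimates on $\xi_N$ (AM/GM, Cauchy--Schwarz, or Chernoff) are required. The evenness assumption on $N$ is essentially cosmetic: it is used only so that $3\bar N$ is an integer and the incomplete-gamma normalisation can be written with a factorial $(3\bar N-1)!$ rather than a general $\Gamma(3\bar N)$.
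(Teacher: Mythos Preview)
Your proposal is correct and follows essentially the same approach as the paper: bound $K_0(x)$ by $\sqrt{\pi/(2x)}\,e^{-x}$, obtain the pointwise density bound $4yK_0(2y)\leq 2\sqrt{\pi}\,\sqrt{y}\,e^{-2y}$, and then convolve $N$ copies to reach a $\mathrm{Gamma}(3\bar N,2)$ CDF. The only cosmetic difference is that the paper carries out the $N$-fold convolution via Laplace transforms rather than invoking the Gamma-sum property directly; both routes produce the identical bound, and your remark that the evenness of $N$ is needed only so that $(3\bar N-1)!$ can replace $\Gamma(3\bar N)$ matches the paper's usage exactly.
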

\begin{proof}
See Appendix A. 
\end{proof}
{\it Remark:} At high SNR, $\epsilon_1\rightarrow 0$, which yields the following approximation for the upper bound shown in \eqref{iomax41s}:
\begin{align} 
{\rm P}_1^{{\rm NOMA}} 
\leq  
\label{iomax42}        2^{N-3\bar{N}}\pi^{\frac{N}{2}}\Gamma^N\left(\frac{3}{2}\right)   \frac{ 2 ^{3\bar{N}}}{(3\bar{N})!}   \epsilon_1 ^{\frac{3}{4}{N}}\doteq \frac{1}{P_0^{\frac{3}{4}{N}}},
\end{align}
where   $\doteq$  denotes exponential equality  \cite{Zhengl03}. Eq. \eqref{iomax42} indicates that $\frac{3}{4}N$ is an achievable diversity order. It is important to point out that the full diversity gain, $N$, is also achievable, as shown in the following. By using the fact that $\xi_N\geq \left| g_{0,n} g_{i,n}\right|$, for $1\leq n \leq N$, ${\rm P}_1^{{\rm NOMA}} $ can be upper bounded as follows:
\begin{align} \label{bad}
{\rm P}_1^{{\rm NOMA}} 
\leq&  \left(\int^{\epsilon_1}_{0}f_{\left| g_{0,n} g_{i,n}\right|^2}(x)  
dx\right)^N \\\nonumber =& \left(1-2\epsilon_1^{\frac{1}{2}} K_1\left(2\epsilon_1^{\frac{1}{2}} \right)\right)^N \approx \left(-\epsilon_1\log \epsilon_1\right)^N \doteq \frac{1}{P_0^N},
\end{align}
 which shows that the full diversity gain $N$ is achievable. 
Although the upper bound in \eqref{bad} is useful for the study of the diversity gain, we note that it is a bound much looser than the upper bound shown in Lemma 
\ref{lemma0}, particularly for the case of large $N$.    ${\rm P}_1^{{\rm I-OMA}} $ can be similarly obtained as ${\rm P}_1^{{\rm NOMA}} $.
\vspace{-0.5em}

\subsection{Random Phase Shifting} The use of random phase shifting can avoid the requirement  of perfect phase adjustment   and reduce  the system overhead needed for acquiring CSI at the source. Recall   $\xi_N$    as shown in  \eqref{theta}, 
where $\theta_n$ is  randomly chosen for random phase shifting.

Different from the coherent phase shifting case, $\xi_N$   is a sum of  complex-valued random variables,   and the imaginary and real parts of $e^{-j\theta_{n}} g_{0,n} g_{i,n}$ are correlated,  which means that the CLT is not directly applicable. In the following, we will show that $\xi_N$ can be still approximated as a complex Gaussian random variable.   We first note that, for small $N$, $\xi_N$ is not complex Gaussian distributed, as can be seen from the following special case.
\begin{proposition}\label{proposition1}
For the special case of $N=1$, the pdf of the real and imaginary parts of $\xi_N$ is given by
\begin{align}
f_{{\rm Re}\{\xi_1\}}(x) = e^{-2|x|}.
\end{align}
\end{proposition}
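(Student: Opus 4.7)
The plan is to exploit the circular symmetry of the complex Gaussian fading coefficients so that the random phase $\theta_1$ can be absorbed, and then to compute the characteristic function of $\textrm{Re}\{\xi_1\}$ by conditioning.

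First I would note that since $g_{0,1}\sim\mathcal{CN}(0,1)$ is circularly symmetric, $e^{-j\theta_1}g_{0,1}$ has the same distribution as $g_{0,1}$ for any (random or deterministic) $\theta_1$ independent of $g_{0,1}$. Consequently $\xi_1 \stackrel{d}{=} g_{0,1}g_{1,1}$, and the phase shift drops out entirely. Writing $g_{0,1}=X_1+jX_2$ and $g_{1,1}=Y_1+jY_2$ with $X_1,X_2,Y_1,Y_2$ i.i.d.\ $\mathcal{N}(0,1/2)$, one gets
\begin{align}
\textrm{Re}\{\xi_1\} \;=\; X_1Y_1-X_2Y_2,
\end{align}
a sum of two i.i.d.\ random variables (the second term being distributed as the first by the symmetry $-Y_2\stackrel{d}{=}Y_2$).

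Next I would compute the characteristic function of $A \triangleq X_1Y_1$ by conditioning on $Y_1$: given $Y_1$, $X_1Y_1\sim\mathcal{N}(0,Y_1^2/2)$, so $E[e^{itA}\mid Y_1]=e^{-t^2Y_1^2/4}$. Taking expectation over $Y_1\sim\mathcal{N}(0,1/2)$ via the standard identity $E[e^{-sY_1^2}]=(1+2s\cdot\tfrac{1}{2})^{-1/2}$ with $s=t^2/4$ yields $\phi_A(t)=(1+t^2/4)^{-1/2}$. By independence of the two product terms,
\begin{align}
\phi_{\textrm{Re}\{\xi_1\}}(t) \;=\; \phi_A(t)^2 \;=\; \frac{4}{4+t^2}.
\end{align}

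Finally I would invert this characteristic function. A direct calculation gives $\int_{-\infty}^{\infty} e^{itx}e^{-2|x|}\,dx = \tfrac{1}{2-it}+\tfrac{1}{2+it}=\tfrac{4}{4+t^2}$, so by uniqueness of the Fourier transform the pdf is $e^{-2|x|}$, as claimed. The pdf of $\textrm{Im}\{\xi_1\}$ follows from the identical argument after a $\pi/2$ phase rotation of $g_{1,1}$, which leaves the distribution invariant.

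I do not anticipate a real obstacle here; the only subtlety is recognizing that the distribution of $\xi_1$ is unaltered by the random phase thanks to circular symmetry, which reduces the problem to computing the characteristic function of a product of two centered complex Gaussians. Once that reduction is made, the computation is a one-line conditioning argument followed by inversion of a rational characteristic function.
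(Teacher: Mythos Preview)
Your argument is correct. Both you and the paper begin by absorbing the random phase into one of the complex Gaussian factors; the paper does this concretely by writing the rotation matrix acting on $(c_n,d_n)$ and noting it is unitary, while you invoke circular symmetry abstractly---these are the same observation. The real difference is in how the distribution of $X_1Y_1-X_2Y_2$ is extracted. The paper conditions on the pair $(\tilde c_n,\tilde d_n)$, so that the conditional law is Gaussian with variance $\tfrac12(\tilde c_n^2+\tilde d_n^2)$, then uses that $\tilde c_n^2+\tilde d_n^2$ is exponential and evaluates the resulting mixture integral via the tabulated identity $\int_0^\infty z^{-1/2}e^{-x^2/z-z}\,dz=\sqrt{\pi}\,e^{-2|x|}$. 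You instead condition on a single factor to get the characteristic function $(1+t^2/4)^{-1/2}$ of one product term, square it by independence, and invert the rational function $4/(4+t^2)$ by inspection. Your route is slightly more elementary in that it avoids the special-function integral, and it has the bonus of delivering the characteristic function $4/(4+t^2)$ directly---which is precisely the quantity the paper computes separately (from the pdf) in its proof of Lemma~\ref{lemma1}. The paper's route, on the other hand, produces the CDF in closed form without any Fourier inversion step.
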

\begin{proof}
See Appendix B. 
\end{proof}
%$f_G(x) = \frac{1}{\sqrt{N\pi}}e^{-\frac{x^2}{N}}$

 However, by increasing  $N$,   the Gaussian approximation becomes applicable to   $\xi_N$,  as shown in the following lemma.  

\begin{lemma} \label{lemma1}
When $N\rightarrow \infty$, $\xi_N$ can be approximated as a complex Gaussian random variable with zero mean and variance $N$, i.e.,
\begin{align}
\xi_N \rightarrow \mathcal{CN}\left(0,N\right).
\end{align}
\end{lemma}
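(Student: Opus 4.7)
Write $\xi_N = \sum_{n=1}^{N} X_n$ with $X_n \triangleq e^{-j\theta_n} g_{0,n} g_{i,n}$. Since the triples $(\theta_n, g_{0,n}, g_{i,n})$ are i.i.d.\ across $n$, the summands $X_n$ are i.i.d.\ complex random variables, and the natural approach is to apply a two-dimensional CLT to the real-valued vectors $(\mathrm{Re}\{X_n\}, \mathrm{Im}\{X_n\})$ and then repackage the limit as a proper complex Gaussian.

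First I would compute the first- and second-order statistics of $X_n$. Independence and $E[e^{-j\theta_n}] = 0$ (for $\theta_n$ uniform on $[0, 2\pi)$) give $E[X_n] = 0$, while $E[|X_n|^2] = E[|g_{0,n}|^2]\, E[|g_{i,n}|^2] = 1$. The decisive identity is that the pseudo-variance vanishes, $E[X_n^2] = 0$, which follows either from $E[e^{-2j\theta_n}] = 0$ or from $E[g_{0,n}^2] = 0$ (both $g_{0,n}$ and $g_{i,n}$ being proper complex Gaussian). Combining $E[|X_n|^2] = 1$ with $E[X_n^2] = 0$ yields
\begin{equation*}
E[\mathrm{Re}\{X_n\}^2] = E[\mathrm{Im}\{X_n\}^2] = \tfrac{1}{2}, \quad E[\mathrm{Re}\{X_n\}\,\mathrm{Im}\{X_n\}] = 0.
\end{equation*}

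Because $(\mathrm{Re}\{X_n\}, \mathrm{Im}\{X_n\})$ are i.i.d.\ $\mathbb{R}^2$-vectors with zero mean, covariance $\tfrac{1}{2} I_2$, and finite second moments, the classical bivariate CLT applies and yields
\begin{equation*}
\frac{1}{\sqrt{N}}\bigl(\mathrm{Re}\{\xi_N\},\,\mathrm{Im}\{\xi_N\}\bigr) \;\Rightarrow\; \mathcal{N}\!\left(\mathbf{0},\, \tfrac{1}{2} I_2\right),
\end{equation*}
which is precisely a proper $\mathcal{CN}(0, 1)$ limit. Reinstating the $\sqrt{N}$ scaling gives $\xi_N \to \mathcal{CN}(0, N)$ as claimed.

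The only step that requires any thought is the vanishing of the pseudo-variance $E[X_n^2]$: this is what forces the limiting real and imaginary parts to be uncorrelated with equal variance, i.e.\ what makes the limit a \emph{proper} complex Gaussian. It is precisely this identity that fails in the coherent case, where the phase is locked to that of $g_{0,n} g_{i,n}$ and each $X_n$ is real and nonnegative, and it is therefore the feature distinguishing the two phase-shifting designs. Proposition~\ref{proposition1} is not in contradiction with the lemma, because the Laplace-type tails exhibited at $N = 1$ are smoothed out by the CLT averaging as $N \to \infty$.
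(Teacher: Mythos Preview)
Your argument is correct and is in fact cleaner than the paper's. The paper proceeds in two separate stages: first it uses Proposition~\ref{proposition1} to obtain the exact Laplace density $e^{-2|x|}$ of each $\mathrm{Re}\{X_n\}$, computes the corresponding characteristic function $4/(4+t^2)$, raises it to the $N$-th power, and takes the limit $\bigl(4/(4+t^2)\bigr)^N\to e^{-Nt^2/4}$ to identify the Gaussian marginal; then, in a second step, it argues joint Gaussianity of $\mathrm{Re}\{\xi_N\}$ and $\mathrm{Im}\{\xi_N\}$ by showing that every linear combination $\beta_1\mathrm{Re}\{\xi_N\}+\beta_2\mathrm{Im}\{\xi_N\}$ has the same structure, and finally verifies uncorrelatedness by a direct moment computation. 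You bypass all of this by computing only the second-order statistics of $X_n$---in particular the vanishing pseudo-variance $E[X_n^2]=0$---and invoking the bivariate CLT once, which delivers joint Gaussianity and the covariance $\tfrac{1}{2}I_2$ simultaneously. The paper's route has the merit of being self-contained (it essentially reproves the CLT via characteristic functions) and of exhibiting the exact finite-$N$ characteristic function, but yours is shorter, does not rely on Proposition~\ref{proposition1} at all, and makes transparent exactly which moment identity (properness of $X_n$) drives the result. One minor remark: your appeal to $E[e^{-j\theta_n}]=0$ for the mean is unnecessary and presumes a uniform phase distribution that the paper never actually specifies; $E[g_{0,n}]=0$ already gives $E[X_n]=0$, and likewise $E[g_{0,n}^2]=0$ alone yields $E[X_n^2]=0$ regardless of how $\theta_n$ is distributed---which is consistent with the paper's own proof, where the rotation by $\theta_n$ is absorbed into the Gaussian vector $(c_n,d_n)$ via a unitary transformation.
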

\begin{proof}
See Appendix C. 
\end{proof}
Therefore, the outage probabilities achieved by IRS  with random phase shifting can be approximated as follows: 
\begin{align}\label{iomax}
{\rm P}_1^{{\rm NOMA}} &\approx 1-e^{-\frac{\epsilon_1}{N }},\quad \& \quad {\rm P}_1^{{\rm I-OMA}} \approx 1-e^{-\frac{\epsilon_2}{N }} ,
\end{align} 
which indicates that IRS transmission with random phase shifting realizes a diversity order of one.

{\bf Phase shift selection:} As shown in the next section, random phase shifting cannot effectively use the spatial degrees of freedom, although it can be implemented at low complexity.  A better tradeoff between performance and complexity can be realized by carrying out phase shift selection, as described in the following. The IRS uses $Q$   sets of random phase shifts to send  $Q$ pilot signals. ${\rm U}_1$ informs the source which set of phase shifts it  prefers.   The use of this simple phase selection can significantly improve the performance of IRS transmission, as shown in the next section.  We also note that the effective channel gains, $\xi_N$, obtained with different sets of random phases are not independent, which makes it difficult to develop analytical results for     the proposed phase selection scheme.  

%We note that the performance of random phase shifting can be further improved by applying opportunistic phase selection, which will be discussed in the next section.    

 \vspace{-0.5em}
\section{Numerical Studies}\label{section sim}
In this section, the performance of the three considered transmission  schemes is evaluated by computer simulations. Without loss of generality,  we choose $c_1^2=0.8$, $c_2^2=0.2$,  $d_2=d_r=20$ m, $d_{r1}=d_{12}=10$ m, $\alpha=4$, $R_1=1.8$ bit per channel use (BPCU). $P_1=P_2=P_r=P_s$. The noise power is $-7 0$ dBm.    IRS-NOMA with coherent phase shifting is studied first in Fig.~\ref{fig1}.  As can be observed  in the figure, conventional relaying can outperform the two IRS transmission schemes, particularly in the low SNR regime. This performance loss is    due to the fact that IRS transmission suffers from severe path loss, $d_r^{\alpha}d_{r1}^{\alpha}$, as previously  pointed out in  \cite{skyx5}. However, by increasing the transmission power or the number of reflecting elements on the IRS, the IRS schemes eventually outperform conventional relaying, where the performance of IRS-NOMA is always better than that of IRS-OMA. The accuracy of the   developed CLT approximation and the upper bound is also evaluated in the figure. As can be seen from the figure, in the low SNR regime, the CLT based approximation is   accurate, and the developed upper bound is more accurate in the high SNR regime.

\begin{figure}[!t]\centering \vspace{-1em}
    \epsfig{file=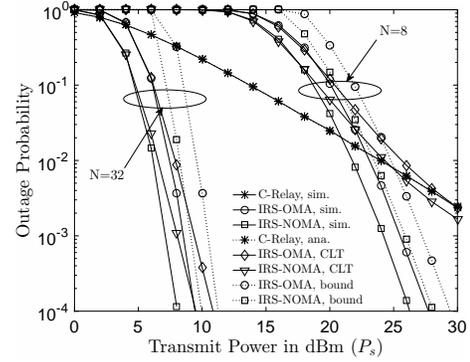, width=0.33\textwidth, clip=}\vspace{-1em}
\caption{ IRS-OMA and IRS-NOMA with the coherent phase shifting scheme vs  conventional relaying.   \vspace{-1.5em} }\label{fig1}\vspace{-0.5em}
\end{figure}

\begin{figure}[!htbp]\centering \vspace{-0.5em}
    \epsfig{file=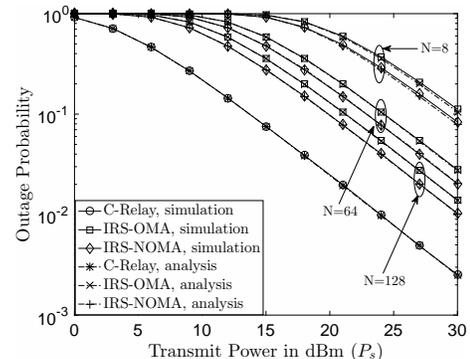, width=0.33\textwidth, clip=}\vspace{-1em}
\caption{ The impact of random phase shifting on IRS transmission.     \vspace{-1.5em} }\label{fig2}\vspace{-1em}
\end{figure}

\begin{figure}[!htbp]\centering \vspace{-0.5em}
    \epsfig{file=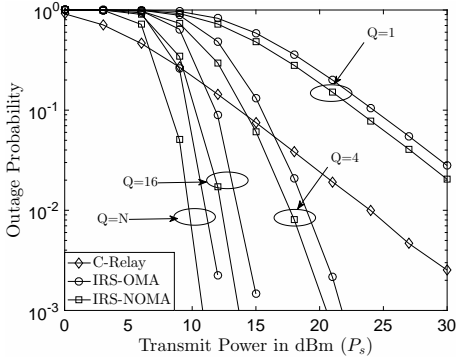, width=0.33\textwidth, clip=}\vspace{-1em}
\caption{ Impact of phase shift selection on IRS transmission.  $N=64$   \vspace{-1.5em} }\label{fig3} \vspace{-1em}
\end{figure}

In Fig. \ref{fig2}, the impact of random phase shifting on IRS transmission is investigated. Recall that the motivation to use random phase shifting is that it  can significantly reduce the system overhead and does not require complicated phase control mechanisms  compared to coherent phase shifting. However,  Fig. \ref{fig2}   shows that conventional relaying outperforms IRS transmission, although  increasing $N$ is helpful to  reduce the performance gap. The reason for this performance loss is due to the fact that random phase shifting cannot efficiently utilize the spatial degrees of freedom offered by IRS.  By implementing the proposed phase selection scheme,  the performance of IRS transmission can be significantly improved. As shown in Fig. \ref{fig3}, with  $Q=4$, IRS-NOMA can realize  a power reduction  of $10$ dBm  at an outage probability of $10^{-3}$, compared to conventional relaying. 
Figs. \ref{fig2} and \ref{fig3} also show that IRS-NOMA always outperforms IRS-OMA, which is consistent with Fig. \ref{fig1}. Furthermore, Fig. \ref{fig2}   demonstrates the accuracy of the approximation based on Lemma \ref{lemma1}.
\vspace{-0.5em}
\section{Conclusions}
In this letter, the impact of   two phase shifting designs on the performance of IRS-NOMA has been studied. Analytical results were developed to show that the two designs achieve different tradeoffs between system performance and complexity.  Simulation results were  provided to show the accuracy of the obtained analytical results and to compare IRS-NOMA to conventional relaying and IRS-OMA. For coherent phase shifting, the pdf of the effective channel gain, $\xi_N$, was evaluated  in this letter by using two types of approximations, and finding  an exact expression for the pdf is an important direction for future research. 

\vspace{-0.5em}
%%%%%%%%%%%%%%%%%%%%%%%%%%%%
\appendices 
\section{Proof for Lemma \ref{lemma0}}
The  outage probability achieved by IRS-OMA is given    as follows:
\begin{align}\label{iomax2}
{\rm P}_1^{{\rm I-OMA}}  
&=\underset{\sum^{N}_{i=1}y_i<\sqrt{\epsilon_1}}{\int\cdots\int} \prod_{i=1}^{N}f_{\left| g_{0,n} g_{i,n}\right|}(y_i) dy_i .
\end{align} 
 The fact that   the pdf of $\left| g_{0,n} g_{i,n}\right|$, $f_{\left| g_{0,n} g_{i,n}\right|}(y_i)$,  contains the Bessel function is the main reason why   the performance analysis is difficult. We note that  an upper bound on the Bessel function was provided in  \cite{Yangbessel} as follows:
\begin{align}
K_0(x)\leq \frac{\sqrt{\pi}e^{-x}}{\sqrt{2 x}}.
\end{align}
By using this upper bound on the Bessel function,  the pdf of $\left| g_{0,n} g_{i,n}\right|$ shown in \eqref{abs1} can be upper bounded as follows:
\begin{align}\label{abs2}
f_{\left| g_{0,n} g_{i,n}\right|}(y) \leq  4y  \frac{\sqrt{\pi}e^{-2y}}{\sqrt{4y}} = 2\pi^{\frac{1}{2}}y^{\frac{1}{2}} e^{-2y}\triangleq g(y).
\end{align}
Because of the simple expression of $g(y)$, an upper bound on the pdf of the sum, $\sum^{N}_{i=1}y_i$, can be obtained, as shown in the following. First, the Laplace transform of the upper bound   $g(y) $ can be obtained as follows:
\begin{align}
\mathcal{L}\left(g(y)\right) = &2\pi^{\frac{1}{2}}\int^{\infty}_{0}e^{-sy}y^{\frac{1}{2}} e^{-2y}dy=\frac{2\pi^{\frac{1}{2}}\Gamma\left(\frac{3}{2}\right)}{(s+2)^{\frac{3}{2}}}.
\end{align}

By using the fact that $y_i$ is i.i.d.,  the pdf of the sum, denoted by $f_{\sum^{N}_{i=1}y_i}(y)$, can be upper bounded as follows:
\begin{align}
f_{\sum^{N}_{i=1}y_i}(y)  \leq &\mathcal{L}^{-1}\left(\frac{2^N\pi^{\frac{N}{2}}\Gamma^N\left(\frac{3}{2}\right)}{(s+2)^{\frac{3N}{2}}}\right).
\end{align}

Assume that $N$ is an even number and let  $\bar{N}=\frac{N}{2}$.  An upper bound on the pdf of the sum  can be obtained as follows:
\begin{align}
f_{\sum^{N}_{i=1}y_i}(y)  \leq &\frac{2^N\pi^{\frac{N}{2}}\Gamma^N\left(\frac{3}{2}\right)}{ (3\bar{N}-1)!} y^{3\bar{N}-1}e^{-2y}.
\end{align}

Therefore, the outage probability achieved by IRS-OMA can be upper bounded as follows: 
\begin{align}\label{iomax41}
{\rm P}_1^{{\rm I-OMA}} 
&\leq  \int^{\sqrt{\epsilon_1}}_{0} \frac{2^N\pi^{\frac{N}{2}}\Gamma^N\left(\frac{3}{2}\right)}{ (3\bar{N}-1)!} y^{3\bar{N}-1}e^{-2y}dy\\\nonumber 
=&   \frac{2^N\pi^{\frac{N}{2}}\Gamma^N\left(\frac{3}{2}\right)}{ (3\bar{N}-1)!}  2^{-3\bar{N}} \gamma\left(3\bar{N}, 2\sqrt{\epsilon_1}\right).
\end{align} 
Thus,  the lemma is proved. 
\vspace{-1em}

\section{Proof for Proposition \ref{proposition1}}

Recall that $g_{0,n}$ and $g_{i,n}$ are complex Gaussian distributed  with zero mean and unit variance, i.e., $g_{0,n}\sim \mathcal{CN}(0,1)$ and $g_{i,n}\sim \mathcal{CN}(0,1)$. Without loss of generality, denote $g_{0,n}$ and $g_{i,n}$ by   $g_{0,n}=a_n+jb_n$ and $g_{i,n}=c_n+jd_n$, respectively, where $a_n$, $b_n$, $c_n$, and $d_n$ are i.i.d., and follow $\mathcal{N}\left(0, \frac{1}{2}\right)$.  Therefore,  $\xi_1$ can be written as follows:  
\begin{align}
\xi_1 &= (a_nc_n-b_nd_n)\cos \theta_{n}-(a_nd_n+b_nc_n)\sin \theta_{n}\\\nonumber &+j\left((a_nc_n-b_nd_n)\sin \theta_{n}+(a_nd_n+b_nc_n)\cos \theta_{n}\right).
\end{align}

We note that   the real and imaginary parts of $\xi_1$ are identically distributed, and hence in the following, we focus on the real part of $\xi_1$ which can be further written as follows:
\begin{align}
\rm{Re}\{\xi_1\} &=  a_n\tilde{c}_n-b_n\tilde{d}_n,
\end{align} 
where $\tilde{c}_n$ and $\tilde{d}_n$ are defined as follows:
\begin{align}\label{12z}
\begin{bmatrix}\tilde{c}_n\\ \tilde{d}_n  \end{bmatrix} = \begin{bmatrix} \cos \theta_{n} &-\sin \theta_{n} \\ \sin \theta_{n} &\cos \theta_{n} \end{bmatrix}\begin{bmatrix} {c}_n\\  {d}_n  \end{bmatrix}. 
\end{align} 
We further note that the transformation matrix in \eqref{12z} is a unitary matrix, and    $c_n$ and $d_n$  are i.i.d. Gaussian random variables. Therefore,   $\tilde{c}_n, \tilde{d}_n \sim \mathcal{N}\left(0, \frac{1}{2}\right)$ since a unitary transformation does not change the statistical property of Gaussian variables.

 Therefore, the cumulative distribution function (CDF) of $\rm{Re}\{\xi_1\}$ is given by
\begin{align}
{\rm P}(\rm{Re}\{\xi_1\}<y) =&{\rm P}(a_n\tilde{c}_n-b_n\tilde{d}_n<y)  \\\nonumber \overset{(i)}{=} &E_{\tilde{c}_n,\tilde{d}_n}\left\{ \int^{y}_{-\infty} \frac{1}{\sqrt{\pi (\tilde{c}_n^2+\tilde{d}_n^2)}}e^{-\frac{x^2}{\tilde{c}_n^2+\tilde{d}_n^2}}dx\right\}
\\\nonumber
\overset{(ii)}{=} & \int^{\infty}_0\int^{y}_{-\infty} \frac{1}{\sqrt{\pi z}}e^{-\frac{x^2}{z}}dx e^{-z}dz,
\end{align}
where step (i) follows the fact that $a_n\tilde{c}_n-b_n\tilde{d}_n$ is a sum of two i.i.d. Gaussian variables, $a_n$ and $b_n$, by treating $\tilde{c}_n$ and $-\tilde{d}_n$ as the weighting constants, and step (ii) follows by the fact that $z\triangleq \tilde{c}_n^2+\tilde{d}_n^2$ is exponentially distributed.

The CDF of $\rm{Re}\{\xi_1\}$ can be simplified as follows:
\begin{align}\nonumber
{\rm P}(\rm{Re}\{\xi_1\}<y) = &  \int^{y}_{-\infty}  \frac{1}{\sqrt{\pi }}\int^{\infty}_0 z^{-\frac{1}{2}}e^{-\frac{x^2}{z}-z}dzdx
\\ 
\overset{(iii)}{=} & \int^{y}_{-\infty}  e^{-2|x|}dx,
\end{align}
where step (iii) follows   \cite[(3.471.15)]{GRADSHTEYN}.  By finding the derivative of the CDF, the pdf in the proposition can be obtained and the proof is complete. 

\vspace{-1em}
\section{Proof for Lemma \ref{lemma1}}
The lemma can be proved by two steps as shown in the following two subsections.   
\vspace{-1.5em}
\subsection{${\rm Re}\{\xi_N\}$, ${\rm Im}\{\xi_N  \} \sim \mathcal{N}\left(0, \frac{N}{2}\right)$  for $N\rightarrow \infty$}
Since ${\rm Re}\{\xi_N\}$ and ${\rm Im}\{\xi_N  \}$ are identically distributed, we will focus on ${\rm Re}\{\xi_N\}$, without loss of generality. Recall that $
{\rm Re}\{\xi_N\} = \sum^{N}_{n=1}{\rm  Re}\{e^{-j\theta_{n}} g_{0,n} g_{i,n}\}$. 

Without directly using the CLT,   the approximation for the pdf of ${\rm Re}\{\xi_N\}$ can be straightforwardly obtained   as follows.
By   applying Proposition \ref{proposition1}, the characteristic function of the pdf of ${\rm Re}\{e^{-j\theta_{n}} g_{0,n} g_{i,n}\}  $ can be obtained as follows:
\begin{align}
\psi_{{\rm Re}\{e^{-j\theta_{n}} g_{0,n} g_{i,n}\} }(t)
  &=\int^{\infty}_{-\infty} e^{itx}f_{\xi_1}(x)dx  =\frac{4}{4+t^2}.
\end{align} 
 By using the fact that  ${\rm Re}\{e^{-j\theta_{n}} g_{0,n} g_{i,n}\} $ is independent from ${\rm Re}\{e^{-j\theta_{m}} g_{0,m} g_{i,m}\} $ for $n\neq m$, the characteristic function of the pdf of ${\rm Re}\{\xi_N\} $ can be obtained as follows:
\begin{align}
\psi_{{\rm Re}\{\xi_N\} }(t) = & \frac{4^N}{\left(4+t^2\right)^N} \underset{N\rightarrow \infty}{\rightarrow} e^{-\frac{Nt^2}{4}},
\end{align}
where the approximation follows by applying the limit of the exponential function. Therefore,   ${\rm Re}\{\xi_N\}$ can be approximated as a Gaussian random variable since     $e^{-\frac{Nt^2}{4}}$ is    the characteristic function of a Gaussian random variable.

\vspace{-1.5em}
\subsection{${\rm Re}\{\xi_N\}$ and ${\rm Im}\{\xi_N\}$ are Independent for $N\rightarrow \infty$}
 
We have proved that ${\rm Re}\{\xi_N\},{\rm Im}\{\xi_N\}\sim \mathcal{N}\left(0, \frac{N}{2}\right)$, for $N\rightarrow \infty$. Therefore, the independence between the two random variables can be proved by showing them to be jointly Gaussian distributed and also uncorrelated. 

In order to show that    ${\rm Re}\{\xi_N\}$ and ${\rm Im}\{\xi_N\}$ are jointly Gaussian distributed, we first build  an arbitrary  linear combination  of ${\rm Re}\{\xi_N\}$ and ${\rm Im}\{\xi_N\}$ with $\beta_1$ and $\beta_2$ as follows:
\begin{align}\label{reim}
&\beta_1 {\rm Re}\{\xi_N\} +\beta_2{\rm Im}\{\xi_N\} \\\nonumber =&  \sum_{n=1}^N a_n(\beta_1\tilde{c}_n+\beta_2\tilde{d}_n)-b_n(\beta_1\tilde{d}_n-\beta_2\tilde{c}_n).
\end{align}
Note that $(\beta_1\tilde{c}_n+\beta_2\tilde{d}_n)$ and $(\beta_1\tilde{d}_n-\beta_2\tilde{c}_n)$ are independent and identically Gaussian distributed since they are constructed  from $\tilde{c}_n$ and $\tilde{d}_n$ with two orthogonal coefficient vectors $\begin{bmatrix}\beta_1&\beta_2\end{bmatrix}^T$ and $\begin{bmatrix}-\beta_2&\beta_1\end{bmatrix}^T$. By following the steps in the previous subsection, it is straightforward to show that the linear combination is also Gaussian distributed, which means that $\rm{Re}\{\xi_N\}$ and $\rm{Im}\{\xi_N\}$ are jointly Gaussian distributed. 

The correlation between ${\rm Re}\{\xi_N\}$ and ${\rm Im}\{\xi_N\}$ is given by
\begin{align}
&\mathcal{E}\left\{ {\rm Re}\{\mathbf{g}_i^H\boldsymbol \Theta_p \mathbf{g}_0\}  {\rm Im}\{\mathbf{g}_i^H\boldsymbol \Theta_p \mathbf{g}_0\}\right\} \\\nonumber =&\mathcal{E}\left\{\left( \sum_{n=1}^N a_n\tilde{c}_n-b_n\tilde{d}_n \right)\left( \sum_{n=1}^N a_n\tilde{d}_n+b_n\tilde{c}_n\right)\right\}
= 0,
\end{align}
which follows   the fact that $a_n$, $b_n$, $\tilde{c}_n$ and $\tilde{d}_n$ are i.i.d.    with zero mean, where $\mathcal{E}\{\cdot\}$ denotes the expectation. 
Therefore, the independence between ${\rm Re}\{\xi_N\}$ and ${\rm Im}\{\xi_N\}$ is proved. 
\vspace{-0.9em}
   \bibliographystyle{IEEEtran}
\bibliography{IEEEfull,trasfer}

% Generated by IEEEtran.bst, version: 1.14 (2015/08/26)
\begin{thebibliography}{10}
\providecommand{\url}[1]{#1}
\csname url@samestyle\endcsname
\providecommand{\newblock}{\relax}
\providecommand{\bibinfo}[2]{#2}
\providecommand{\BIBentrySTDinterwordspacing}{\spaceskip=0pt\relax}
\providecommand{\BIBentryALTinterwordstretchfactor}{4}
\providecommand{\BIBentryALTinterwordspacing}{\spaceskip=\fontdimen2\font plus
\BIBentryALTinterwordstretchfactor\fontdimen3\font minus
  \fontdimen4\font\relax}
\providecommand{\BIBforeignlanguage}[2]{{%
\expandafter\ifx\csname l@#1\endcsname\relax
\typeout{** WARNING: IEEEtran.bst: No hyphenation pattern has been}%
\typeout{** loaded for the language `#1'. Using the pattern for}%
\typeout{** the default language instead.}%
\else
\language=\csname l@#1\endcsname
\fi
#2}}
\providecommand{\BIBdecl}{\relax}
\BIBdecl

\bibitem{irs2}
Q.~{Wu} and R.~{Zhang}, ``Intelligent reflecting surface enhanced wireless
  network via joint active and passive beamforming,'' \emph{IEEE Trans.
  Wireless Commun.}, vol.~18, no.~11, pp. 5394--5409, Nov. 2019.

\bibitem{irs1}
M.~D. Renzo, M.~Debbah, D.-T. Phan-Huy, A.~Zappone, M.-S. Alouini, C.~Yuen,
  V.~Sciancalepore, G.~C. Alexandropoulos, J.~Hoydis, H.~Gacanin, J.~de~Rosny,
  A.~Bounceu, G.~Lerosey, and M.~Fink, ``Smart radio environments empowered by
  {AI} reconfigurable meta-surfaces: An idea whose time has come,'' Available
  on-line at arXiv:1903.08925.

\bibitem{irsce3}
Q.~{Wu} and R.~{Zhang}, ``Towards smart and reconfigurable environment:
  Intelligent reflecting surface aided wireless network,'' \emph{IEEE Commun.
  Mag.}, (to appear in 2020).

\bibitem{robertirsmy1}
V.~Jamali, A.~M. Tulino, G.~Fischer, R.~Müller, and R.~Schober, ``Intelligent
  reflecting and transmitting surface aided millimeter wave massive {MIMO},''
  Available on-line at arXiv:1902.07670.

\bibitem{7918554}
Z.~Ding, L.~Dai, R.~Schober, and H.~V. Poor, ``{NOMA} meets finite resolution
  analog beamforming in massive {MIMO} and millimeter-wave networks,''
  \emph{IEEE Commun. Lett.}, vol.~21, no.~8, pp. 1879--1882, Aug. 2017.

\bibitem{sk8e}
Q.~Zhang, W.~Saad, and M.~Bennis, ``Reflections in the sky: Millimeter wave
  communication with {UAV}-carried intelligent reflectors,'' in \emph{Proc.
  IEEE Global Commun. Conf. (GLOBECOM)}, Hawaii, US, Dec. 2019.

\bibitem{skyx1}
B.~Lyu, D.~T. Hoang, S.~Gong, D.~Niyato, and D.~I. Kim, ``{IRS}-based wireless
  jamming attacks: When jammers can attack without power,'' Available on-line
  at arXiv:2001.01887.

\bibitem{skyx2}
C.~Pan, H.~Ren, K.~Wang, M.~Elkashlan, A.~Nallanathan, J.~Wang, and L.~Hanzo,
  ``Intelligent reflecting surface aided {MIMO} broadcasting for simultaneous
  wireless information and power transfer,'' Available on-line at
  arXiv:1908.04863.

\bibitem{irsmy1}
Z.~Ding and H.~V. Poor, ``A simple design of {IRS-NOMA} transmission,''
  \emph{IEEE Commun. Lett.}, Available on-line at arXiv:1907.09918.

\bibitem{6725568}
N.~C. {Sagias}, ``On the {ASEP} of decode-and-forward dual-hop networks with
  pilot-symbol assisted {M-PSK},'' \emph{IEEE Trans. Commun.}, vol.~62, no.~2,
  pp. 510--521, Feb. 2014.

\bibitem{GRADSHTEYN}
I.~S. Gradshteyn and I.~M. Ryzhik, \emph{Table of Integrals, Series and
  Products}, 6th~ed.\hskip 1em plus 0.5em minus 0.4em\relax New York: Academic
  Press, 2000.

\bibitem{Zhengl03}
L.~Zheng and D.~N.~C. Tse, ``Diversity and multiplexing : {A} fundamental
  tradeoff in multiple antenna channels,'' \emph{IEEE Trans. Inform. Theory},
  vol.~49, pp. 1073--1096, May 2003.

\bibitem{skyx5}
E.~Bj\"{o}rnson, O.~\"{O}zdogan, and E.~G. Larsson, ``Intelligent reflecting
  surface vs. decode-and-forward: How large surfaces are needed to beat
  relaying?'' Available on-line at arXiv:1906.03949.

\bibitem{Yangbessel}
Z.-H. Yang and Y.-M. Chu, ``On approximating the modified bessel function of
  the second kind,'' \emph{Journal of Inequalities and Applications}, vol.~41,
  Dec. 2017.

\end{thebibliography}
 
   \end{document}